\documentclass[conference]{IEEEtran}

\usepackage{verbatim}
\usepackage{epsfig,bbm}
\usepackage{CJK}
\usepackage{indentfirst}
\usepackage{multirow}
\usepackage{epstopdf}
\usepackage{graphicx}
\usepackage{footmisc}
\graphicspath{{./figures/}}
\usepackage{amsfonts}
\usepackage{mathrsfs}
\usepackage{setspace}
\usepackage{amsmath}
\usepackage{algorithm,algorithmic,amsbsy,amsmath,amssymb,epsfig,bbm,mathrsfs, bbm} 
\usepackage{amsthm}
\usepackage{verbatim}
\hyphenation{op-tical net-works semi-conduc-tor}
\usepackage[subfigure]{graphfig}

\textheight 9.6in
\topmargin -0.8in

\newtheorem{theorem}{Theorem}

\begin{document}

\title{Performance Analysis of Simultaneous Wireless Information and Power Transfer with Ambient RF Energy Harvesting}
 \author{Xiao Lu$^{\dagger}$,  Ian Flint$^{\ddagger}$, Dusit Niyato$^{\dagger}$, Nicolas Privault$^{\ddagger}$, and Ping Wang$^{\dagger}$	\\
 		~$^{\ddagger}$ School of Computer Engineering, Nanyang Technological University, Singapore \\
    ~$^{\dagger}$ School of Physical $\&$ Mathematical Sciences, Nanyang Technological University, Singapore  \\
   }

\markboth{}{Shell \MakeLowercase{\textit{et al.}}: Bare Demo of
IEEEtran.cls for Journals}

\maketitle
\begin{abstract}
 
The advance in RF energy transfer and harvesting technique over the past decade has enabled wireless energy replenishment for electronic devices, which is deemed as a promising alternative to address the energy bottleneck of conventional battery-powered devices. In this paper, by using a stochastic geometry approach, we aim to analyze the performance of an RF-powered wireless sensor in a downlink simultaneous wireless information and power transfer (SWIPT) system with ambient RF transmitters. Specifically, we consider the point-to-point downlink SWIPT transmission from an access point to a wireless sensor in a network, where ambient RF transmitters are distributed as a Ginibre $\alpha$-determinantal point process (DPP), which becomes the Poisson point process when $\alpha$ approaches zero. In the considered network, we focus on analyzing the performance of a sensor equipped with the power-splitting architecture. Under this architecture, we characterize the expected RF energy harvesting rate of the sensor. Moreover, we 
derive the upper bound of both power and transmission outage probabilities. Numerical results show that our upper bounds are accurate for different value of $\alpha$. 

\end{abstract}

\emph{Index terms- RF energy harvesting, SWIPT, power splitting, determinantal point process, Poisson point process, Ginibre model}  

\section{Introduction}
RF energy harvesting techniques have evolved as a promising and cost-effective solution to supply energy for wireless networks \cite{X.2014Lu,X.June2014Lu}. The research efforts over the past decade have advanced RF energy harvesting technique in circuit sensitivity, antenna efficiency, RF-to-DC conversion efficiency, and frequency range, etc \cite{XLuSurvey}.  
The recent development has also brought commercial products into the market. For example, the Powercaster transmitter and Powerharvester receiver \cite{Powercast} allow a transmission of 1W or 3W isotropic wireless power, and reception of the power by converting the harvested RF waves into electricity, respectively.
In this context, RF signals have been advocated to carry information as well as RF energy at the same time, which is referred to as the concept of simultaneous wireless information and power transfer (SWIPT) \cite{ZhangRuiMIMO}. Recently, SWIPT has drawn great research attention and been intensively investigated, for example, in SISO channel without and with co-channel interference, SISO relay channel, MISO broadcast system, MIMO broadcast system, and MIMO relay channel~\cite{XLuSurvey}. 


For performance analysis of large-scale RF energy harvesting networks, stochastic geometry is a suitable tool that characterizes random spatial patterns with point process. Poisson Point Process (PPP) modeling has been applied to analyze RF energy harvesting performance in cellular network~\cite{K2014Huang}, cognitive radio network~\cite{S2013Lee}, relay network~\cite{I2014Krikidis}, and network-coded cooperative network \cite{V.2014Mekikis}. The study in \cite{K2014Huang} investigates tradeoffs among transmit power and density of mobiles and wireless charging stations which are both modeled as a homogeneous PPP. The authors in~\cite{S2013Lee} study a cognitive radio network with energy harvesting secondary users, wherein both the primary and secondary networks are distributed as independent homogeneous PPPs. The maximum throughput of the secondary network has been characterized under the outage probability requirements for both primary and secondary networks. Reference \cite{I2014Krikidis} focuses on the impact of cooperative density and relay selection in a large-scale network with transmitter-receiver pairs distributed as a PPP.  In \cite{V.2014Mekikis}, the authors adopt PPP to model a two-way network-coded cooperative network with energy harvesting relays. The probability of successful data exchange and the network lifetime gain are derived in closed-form expressions.
Different from the above related work, our previous work in \cite{UL-SA-TS} adopts a more general analytical framework with Ginibre $\alpha$-determinantal point process (DPP) modeling, wherein the PPP is a special case when $\alpha$ approaches zero. Considering a stochastic network with ambient RF sources distributed following a Ginibre $\alpha$-DPP, we have investigated the uplink performance of an RF-powered sensor adopting separated receiver architecture, which equips the information receiver and RF energy harvester with independent antennas so that they function separately and observe different channel gains.
 

In this work, we continue to adopt the Ginibre $\alpha$-DPP modeling approach, which is suitable for modeling random phenomena where attraction/repulsion is observed. As attraction (or clustering) and repulsion are common behaviors in wireless communication systems, such as mobile cellular networks~\cite{S2013Cho} and mobile social networks~\cite{N2014Vastardis}, we aim to analyze network performance by characterizing different degrees of repulsion with the Ginibre $\alpha$-DPP.
In particular, we focus on the downlink performance of a point-to-point SWIPT system, where the receiver, i.e., an RF-powered sensor, performs information decoding and energy harvesting simultaneously. The considered sensor adopts the power-splitting architecture \cite{ZhangRuiMIMO}, which allows the information receiver and RF energy harvester to share the same antenna.
The sensor is assumed to be battery-free and operates based on the instant RF energy harvested from ambient RF transmitters.  Based on the considered model, we first characterize the expected RF energy harvesting rate (in Watt), then derive the upper bounds of both power and transmission outage probabilities in closed forms. The performance analysis provides a useful insight into the tradeoff among various network parameters.


{\em Notations}: Throughout the paper, we use $\mathbb{E}[X]$ to denote the probabilistic expectation of a random variable $X$, and $\mathbb{P}(A)$ to denote the probability of an event $A$. 

\subsection{Network Model}

We consider a battery-free sensor node harvesting energy from an access point and ambient RF transmitters. The power supply of the sensor solely comes from the instant harvested RF energy. Figure \ref{fig:systemmodel} shows the considered network model, where the sensor node harvests RF energy and utilizes the instantly harvested energy to power the circuit of the sensor.  We assume that the ambient RF transmitters, e.g., wireless routers and cellular mobiles, which can be deemed as RF energy sources for the sensor, are distributed as a general class of point processes, which will be specified in detail in Section~\ref{sec:geometricmodeling}.

The sensor is considered to adopt the power-splitting architecture \cite{ZhangRuiMIMO}, which enables the sensor to perform data transmission and RF energy harvesting simultaneously.  
As shown in Fig. \ref{fig:power-splitting}, with the power-splitting architecture, the sensor is equipped with a single antenna. By adopting a power splitter, this architecture splits the received RF signals into two streams for the information receiver and RF energy harvester respectively.  
After the power splitting, the portion of RF signals split to the energy harvester is denoted by $\eta$ (0 $\leq \eta \leq 1$), and that to the information receiver is $1-\eta$.

\begin{figure}
\centering
\includegraphics[width=0.35\textwidth]{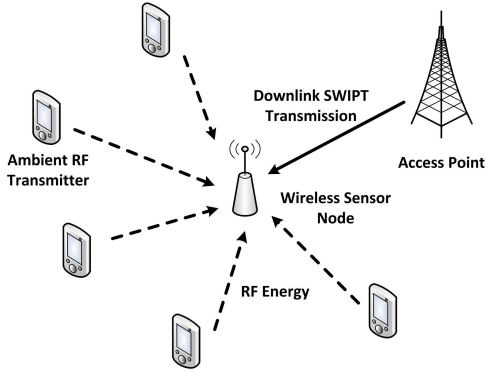}
\caption{A network model of ambient RF energy harvesting.} \label{fig:systemmodel}
\end{figure}

The RF energy harvesting rate of the sensor node from the access point in a free-space channel $P^{A}_{\mathrm{H}}$ can be obtained based on the Friis equation~\cite{Visser2013} \footnote{Other RF signal propagation models can also be used without loss of generality in the analysis of this paper.}
 as follows:
\begin{equation}
\label{eq:harvestedRFpower}
	P^{A}_{\mathrm{H}}	= \eta \beta P_{\mathrm{A}} \frac{G_{\mathrm{A}} G_{\mathrm{H}} \lambda^{2}_{A}}{(4\pi d_{A})^{2}},
\end{equation}
where $\beta$ is the RF-to-DC power conversion efficiency of the sensor node. 
$P_{\mathrm{A}}$, $G_{\mathrm{A}}$ and $\lambda_A$ are the transmit power, transmit antenna gain and transmitted  wavelength of the access point, respectively. $d_{A}$ is the distance between the  transmit antenna of the access point and the receiver antenna of the sensor node. $G_{\mathrm{H}}$ is the receive antenna gain of the sensor node. Let ${\mathbf{x}}_A \in {\mathbb{R}}^2$ be the coordinates of the access point $A$ in a referential centered at the sensor node. The distance can be obtained from $d_{A} = \epsilon + \lVert{\mathbf{x}}_A \rVert$, where $\epsilon$ is a fixed (small) parameter which ensures that the associated harvested RF power is finite in expectation. Physically, $\epsilon$ is the closest distance that the access point can locate near the sensor node.  


Let $P_{\mathrm{k}}$, $G_{\mathrm{k}}$ and $\lambda_k$ denote the transmit power, transmit antenna gain and transmitted  wavelength of the RF transmitter $k \in \mathcal{K}$, respectively. As the focus of this paper is to analyze the impact of the locations of ambient RF transmitters to the performance of the sensor node, similar to the related work \cite{G.2011Andrews}, we intentionally make some other parameters to be constants for ease of presentation and analysis.  Specifically, we have  $P_{\mathrm{k}}=P_S, G_{k}=G_{S}$, and $\lambda_k=\lambda$, for $k \in \mathcal{K}$.
Let ${\mathbf{x}}_k \in {\mathbb{R}}^2$ be the coordinates of the RF transmitter $k$.
Similar to \eqref{eq:harvestedRFpower}, we can calculate the RF energy harvesting rate from each RF transmitter $k \in \mathcal{K}$. Then, the aggregated RF energy harvesting rate by the sensor node can be computed as follows:
\begin{align}
\label{eq:PHpowersplitting}
P_{\mathrm{H}}^{PS}  =\eta \beta P_{\mathrm{A}} \frac{G_{\mathrm{A}} G_{\mathrm{H}} \lambda^{2}_{A}}{(4\pi  \lVert {\mathbf{x}}_A\rVert 	)^{2}}+	\sum_{ k \in {\mathcal{K}} }	\eta \beta P_{\mathrm{S}} \frac{G_{\mathrm{S}} G_{\mathrm{H}} \lambda^{2}}{(4\pi 	(\epsilon + \lVert {\mathbf{x}}_k \rVert)	)^{2}} 
\end{align}
where the second term represents the total energy harvesting rate from ambient RF transmitters. $\mathbf{x}_k$ denotes the location of RF transmitter $k$.

For the considered SWIPT system,  the scenarios of out-of-band transmission and in-band transmission need to be investigated. In the former, the access point transmits on a frequency band different from the one used for the RF energy harvesting (without co-channel interference). In the latter, the access point transmits on the same frequency band of ambient RF energy sources (with co-channel interference).
The downlink information rate at the sensor can be computed as in \eqref{eq:maxtransmissionPS}  \cite{XZhou2013},
\begin{figure*} 
\normalsize
\begin{eqnarray} 
\label{eq:maxtransmissionPS}
	C^{PS}=
	\begin{cases}
		  W \cdot \log_2	\left(	1 + h_{\mathrm A}	\frac { (1-\eta) P_{\mathrm{A}}	}	{ \xi I^{PS} +(1-\eta)\sigma^2+\sigma_{SP}^2}	\right) & \text{ if } P_{\mathrm H}^{PS} \geq P_{\mathrm C},\\
		0&\text{ if }P_{\mathrm H}^{PS} < P_{\mathrm C},
	\end{cases}
\end{eqnarray} 
\vspace{1pt} \hrulefill
\end{figure*} 
where $\xi$ in an indicator depending on whether we consider an out-of-band (i.e., $\xi=0$)
or in-band transmission scenario ($\xi=1$). $\sigma^2$ and $\sigma^2_{SP}$ represent the additive white Gaussian noise power and signal processing noise power, respectively. $I^{PS}$ denotes the interference from the ambient RF transmitters, which can be modeled as follows:
\begin{equation}
I^{PS} = \sum_{k \in \mathcal K}(1-\eta) P_{\mathrm{S}} \frac{G_{\mathrm{S}} G_{\mathrm{H}} \lambda^{2}}{(4\pi (\epsilon + \lVert {\mathbf{x}}_k \rVert)	)^{2}}.
\end{equation}

\begin{figure}
\centering
\includegraphics[width=0.322\textwidth]{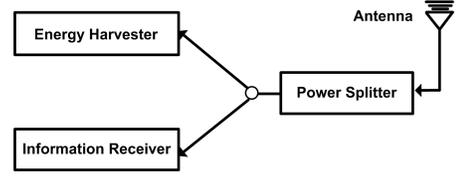}
\caption{Power-splitting receiver architecture.} \label{fig:power-splitting}
\end{figure}

\subsection{Stochastic Modeling of Ambient RF Transmitters} \label{sec:geometricmodeling}

We model the locations of RF transmitters using a point process $\mathcal{K}$ on an observation window $O:=\mathcal{B}(0,R)$ which is the closed ball centered at the origin and of radius $R>0$. In other terms, $\mathcal{K}$ is an almost surely finite random collection of points inside $\mathcal{B}(0,R)$. We refer to~\cite{Kallenberg} for the general theory of point processes. 

We focus on the Ginibre $\alpha$-DPP which is a type of $\alpha$-DPP (see~\cite{ShiraiTakahashi} for definitions and technical results). The Ginibre process is defined by the so-called Ginibre kernel given by
\begin{equation}
\label{eq:ginibre}
K(x,y)=\rho e^{\pi\rho x \bar{y}} e^{-\frac{\pi\rho}{2}( |x|^2 + |y|^2)},
\quad 
 x,y \in O=\mathcal{B}(0,R). 
\end{equation} 

We will write $\mathcal{K}\sim\mathrm{Det}(\alpha,K,\rho)$ 
when $\mathcal{K}$ is an $\alpha$-DPP with 
kernel $K$ defined in \eqref{eq:ginibre} and density with respect to the Lebesgue measure $\rho$. 
The spectral theorem for Hermitian and compact operators 
yields the following decomposition for the kernel of $K$:
\begin{equation*}
K(x,y)=\sum_{n\ge 0} \lambda_n \varphi_n(x)\overline{\varphi_n(y)},
\end{equation*}
where $(\varphi_i)_{i\ge 0}$ is a basis of $\mathrm{L}^2({O},\lambda)$, and $(\lambda_i)_{i\ge 0}$ the corresponding eigenvalues. 
 In e.g.~\cite{DecreusefondFlintVergne}, it is shown that
 the eigenvalues of the Ginibre point process on $O=\mathcal{B}(0,R)$ 
 are given by 
\begin{equation}
\label{eq:eigenvalues}
\lambda_n = \frac{\Gamma(n+1, \pi\rho R^2)}{n!},
\end{equation}
 where 
\begin{equation}
\label{eq:defgamma}
\Gamma(z,a) := \int_0^a e^{-t} t^{z-1}\,\mathrm{d}t, 
\qquad 
z \in \mathbb{C}, \quad a \ge 0, 
\end{equation}
 is the lower incomplete Gamma function. 
The eigenvectors of $K$ are given by
\begin{equation*}
\varphi_n(z) :=  \frac{1}{\sqrt{\lambda_n}}\frac{\sqrt{\rho}}{\sqrt{ n!}} e^{-\frac{\pi\rho}{2} | z |^2} (\sqrt{\pi\rho} z)^n, 
\qquad 
 n \in \mathbb{N}, 
\quad 
 z \in O. 
\end{equation*}
 We refer to~\cite{DecreusefondFlintVergne} for 
 further mathematical details on the Ginibre point process. 
 
Lastly, we emphasize that the Ginibre $\alpha$-DPP is stationary, in the sense that its distribution is invariant with respect to translations, c.f.~\cite{DecreusefondFlintVergne}. Hence, our choice of $O=\mathcal{B}(0,R)$ centered at the origin instead of $\mathbf{x}_i$ is justified.

\subsection{Performance Metrics} \label{sec:metrics}
 
We define the performance metrics of the sensor node as the expected of RF energy harvesting rate, power outage probability and transmission outage probability.  
The expected RF energy harvesting rate is defined as:
\begin{equation}\label{eq:expect}
 E_{P_H} \triangleq \mathbb{E} \left[ P^{PS}_{\mathrm{H}} \right]	.
\end{equation}

Power outage occurs when the sensor node becomes inactive due to lack of enough energy supply. The power outage probability is then defined as follows:
\begin{equation}
\label{eq:phi}
		P_{po} 	\triangleq	\mathbb{P} \left(	 P_{\mathrm{H}}	<	P_{\mathrm{C}}	\right)	,
\end{equation}
where $P_{\mathrm{C}}$ denotes the constant for power consumption of sensor node. Following practical models~\cite{G2009Miao}, the circuit power consumption of the sensor is assumed to be fixed.

Let $m \ge 0$ denote the minimum transmission rate requirement. If the sensor fails to meet this requirement, a transmission outage occurs. The transmission outage probability can be defined as follows:
\begin{equation}
\label{eq:psi}
	P_{to} \triangleq	\mathbb{P} \left(	C	<	m	\right)	.
\end{equation}

Specifically, for the above two outage probabilities, we focus on analyzing their upper bounds in this paper. 

\section{Performance Analysis}
\label{sec:Analysis}

In this section we estimate the metrics defined in Section~\ref{sec:metrics}
when $\mathcal{K}\sim\mathrm{Gin}(\alpha,\rho)$ is the Ginibre $\alpha$-DPP with parameter $\alpha=-1/j$, where $j\in\mathbb{N}^*$, and density $\rho>0$. 


\subsection{RF Energy Harvesting Rate}

The expected RF energy harvesting rate is evaluated as follows, which is similar to the result obtained in Theorem~1 in \cite{UL-SA-TS}. 
\begin{theorem}
\label{thm:evharvestedenergyPS}
The expected RF energy harvesting rate in the power-splitting architecture can be explicitly computed as 
\begin{align}
\label{eq:average_energy}
& \mathbb{E}[P_{\mathrm{H}}^{PS}] =\eta \beta P_A \frac{G_{\mathrm{A}} G_{\mathrm{H}} \lambda_A^{2}}{(4\pi 	 \lVert {\mathbf{x}}_A\rVert 	)^{2}}  \nonumber \\
& +2 \pi\eta \beta P_{\mathrm{S}} \frac{G_{\mathrm{S}}G_{\mathrm{H}} \lambda^{2}}{(4\pi )^{2}}
\rho\left(\frac{\epsilon}{R+\epsilon}+\ln(R+\epsilon)-1-\ln(\epsilon)\right)\\
& \approx_{\epsilon\rightarrow0}\frac{\rho\eta \beta P_{\mathrm{S}} G_{\mathrm{S}}G_{\mathrm{H}} \lambda^{2}}{8\pi}\ln\left(\frac R\epsilon\right). \label{eq:approximation_energy}
\end{align}
\end{theorem}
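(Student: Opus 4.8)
The plan is to isolate the deterministic access-point contribution from the random ambient-transmitter contribution and then evaluate the latter with a first-moment (Campbell) formula. Taking expectations in \eqref{eq:PHpowersplitting} and using linearity together with the fact that $\mathbf{x}_A$ is a fixed point, the first summand passes through unchanged and already matches the first term of \eqref{eq:average_energy}; the whole task therefore reduces to computing
\begin{equation*}
\mathbb{E}\!\left[\sum_{k\in\mathcal{K}}\frac{1}{(\epsilon+\lVert\mathbf{x}_k\rVert)^2}\right].
\end{equation*}

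For this I would invoke Campbell's theorem, $\mathbb{E}[\sum_{k\in\mathcal{K}}f(\mathbf{x}_k)]=\int_O f(x)\,\rho^{(1)}(x)\,\mathrm{d}x$, and use the standard fact (see~\cite{ShiraiTakahashi}) that the first correlation function of an $\alpha$-DPP is $\rho^{(1)}(x)=K(x,x)$, \emph{independently of $\alpha$}. Reading off \eqref{eq:ginibre} gives $K(x,x)=\rho$ for all $x\in\mathcal{B}(0,R)$, so the expectation of the sum equals $\rho\int_{\mathcal{B}(0,R)}(\epsilon+\lVert x\rVert)^{-2}\,\mathrm{d}x = 2\pi\rho\int_0^R r(\epsilon+r)^{-2}\,\mathrm{d}r$ after passing to polar coordinates. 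The substitution $u=\epsilon+r$ makes the radial integral elementary, equal to $\ln(R+\epsilon)+\epsilon/(R+\epsilon)-\ln\epsilon-1$, which is finite precisely because $\epsilon>0$ keeps the integrand bounded near the origin. Multiplying by the prefactor $\eta\beta P_{\mathrm S} G_{\mathrm S}G_{\mathrm H}\lambda^2/(4\pi)^2$ and simplifying recovers \eqref{eq:average_energy}.

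Finally, for \eqref{eq:approximation_energy} I would let $\epsilon\to 0$: the terms $\epsilon/(R+\epsilon)$ and $-1$ remain bounded while $-\ln\epsilon$ diverges, so the bracketed expression is asymptotically $\ln(R/\epsilon)$, the access-point term is likewise negligible against it, and using $2\pi/(4\pi)^2=1/(8\pi)$ yields the stated leading-order formula. The only step that is more than routine calculus is the appeal to the $\alpha$-independence of the first intensity measure and the evaluation $K(x,x)=\rho$; once those are granted, the argument is a single polar-coordinate integral, and indeed it parallels Theorem~1 of~\cite{UL-SA-TS}.
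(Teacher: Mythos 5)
Your proposal is correct and matches the paper's proof essentially line for line: both take expectations, apply Campbell's formula with the first intensity $\rho^{(1)}(x)=K(x,x)=\rho$ (which is indeed $\alpha$-independent), pass to polar coordinates, and evaluate $\int_0^R r(\epsilon+r)^{-2}\,\mathrm{d}r$ explicitly. The small-$\epsilon$ asymptotics you add at the end are also the intended reading of \eqref{eq:approximation_energy}, so nothing is missing.
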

We recall some remarks which were made in \cite{UL-SA-TS}. First, we note that Theorem~\ref{thm:evharvestedenergyPS} implies that at the level of expectations, the Ginibre $\alpha$-DPP behaves like a homogeneous PPP and in particular, the expectation of RF energy harvesting rate is independent of the repulsion parameter $\alpha$. Therefore, on average, the harvested energy stays the same when $\alpha$ varies.  
\begin{proof}
The proof is the same as that of Theorem~1 in \cite{UL-SA-TS}, which we recall here for convenience. We have
\begin{align}
\mathbb{E}[P_{\mathrm{H}}^{PS}] & = \eta \beta P_A \frac{G_{\mathrm{A}}  G_{\mathrm{H}}  \lambda_A^{2}}{(4\pi 	 \lVert {\mathbf{x}}_A\rVert 	)^{2}} \nonumber \\ 
& + \eta \beta P_{\mathrm{S}} \frac{G_{\mathrm{S}}G_{\mathrm{H}} \lambda^{2}}{(4\pi )^{2}} \int_{O} \frac{\rho^{(1)}(x)}{(\epsilon+\|x\|)^2}\,\mathrm{d}x
\end{align}
by Campbell's formula~\cite{Kallenberg}, where 
$\rho^{(1)}(x)=K(x,x)=\rho$ 
 is the intensity function of $\mathcal{K}$ given by \cite{DecreusefondFlintVergne}.
We thus find
\begin{align}
\mathbb{E}[P_{\mathrm{H}}^{PS}] & =\eta \beta P_A \frac{G_{\mathrm{A}}  G_{\mathrm{H}}  \lambda_A^{2}}{(4\pi 	 \lVert {\mathbf{x}}_A\rVert 	)^{2}} \nonumber  \\
& +\eta \beta  P_{\mathrm{S}} \frac{G_{\mathrm{S}}G_{\mathrm{H}}  \lambda^{2}}{(4\pi )^{2}} 2 \pi \int_{0}^R \rho \frac{r}{(\epsilon+r)^2}\,\mathrm{d}r,
\end{align}
 by polar change of variable,
 and the integral on the r.h.s. is computed explicitly as follows:
\begin{equation*}
 \int_{0}^R\frac{r}{(\epsilon+r)^2}\,\mathrm{d}r=
\left(\frac{\epsilon}{R+\epsilon}+\ln(R+\epsilon)-1-\ln(\epsilon)\right),
\end{equation*}
which yields the result.
\end{proof}

\subsection{Power Outage Probability}

\begin{theorem}
\label{thm:upperboundoutage}
Let us define 
\begin{equation*}
\gamma^{PS}:=\frac\lambda{4\pi}\sqrt{ \frac{\eta \beta P_{\mathrm{S}}G_{\mathrm{S}} G_{\mathrm{H}} }{P_{\mathrm{C}} - P^A_{\mathrm H} } },
\end{equation*}
where $P^A_{\mathrm H}$ is defined in \eqref{eq:harvestedRFpower}.

If $P_{\mathrm C} \geq  P^A_{\mathrm H}$, then the following bound holds:
\begin{align} \label{power_outage_probability}
& \mathbb P \left(	 P_{\mathrm{H}}^{PS}	<	P_{\mathrm{C}}	\right) \nonumber \\
& \le  \left(\prod_{n\ge 0} \left(1+\alpha  \frac{\Gamma(n+1, \pi\rho\inf(R,\gamma^{PS})^2)}{n!}\right)\right)^{-1/\alpha},
\end{align}
where $\Gamma(z,a)$ is the lower incomplete Gamma function defined in \eqref{eq:defgamma}.

If $P_{\mathrm C} <  P^A_{\mathrm H}$, then  $\mathbb P \left(P_{\mathrm{H}}^{PS}	<	P_{\mathrm{C}}	\right)=0$.
\end{theorem}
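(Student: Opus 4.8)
The plan is to reduce the power outage event to a \emph{void event} --- the absence of points of $\mathcal{K}$ in a ball --- and then evaluate the corresponding void probability of the Ginibre $\alpha$-DPP in closed form from its spectral decomposition. Set $c:=\eta\beta P_{\mathrm S}G_{\mathrm S}G_{\mathrm H}\lambda^2/(4\pi)^2>0$, so that \eqref{eq:PHpowersplitting} reads $P_{\mathrm H}^{PS}=P^A_{\mathrm H}+c\sum_{k\in\mathcal K}(\epsilon+\lVert\mathbf{x}_k\rVert)^{-2}\ge P^A_{\mathrm H}$ almost surely; this immediately gives the last assertion, since $\{P_{\mathrm H}^{PS}<P_{\mathrm C}\}$ is empty when $P_{\mathrm C}<P^A_{\mathrm H}$. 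Assume henceforth $P_{\mathrm C}\ge P^A_{\mathrm H}$. From the definition of $\gamma^{PS}$ one checks that $(P_{\mathrm C}-P^A_{\mathrm H})/c=(\gamma^{PS})^{-2}$, hence $\{P_{\mathrm H}^{PS}<P_{\mathrm C}\}=\{\sum_{k\in\mathcal K}(\epsilon+\lVert\mathbf{x}_k\rVert)^{-2}<(\gamma^{PS})^{-2}\}$. On this event every term of the nonnegative sum is in particular $<(\gamma^{PS})^{-2}$, so $\epsilon+\lVert\mathbf{x}_k\rVert>\gamma^{PS}$, i.e.\ $\lVert\mathbf{x}_k\rVert>\gamma^{PS}-\epsilon$, for \emph{every} point $\mathbf{x}_k\in\mathcal K$; equivalently, $\mathcal K$ has no point in the ball $\mathcal B(0,r)$ with $r:=\inf(R,(\gamma^{PS}-\epsilon)_+)$ (the cap at $R$ changes nothing, as $\mathcal K\subset O=\mathcal B(0,R)$). (The printed bound uses $r=\inf(R,\gamma^{PS})$, i.e.\ neglects the small guard distance $\epsilon$ in this step; I would flag this $\epsilon$-discrepancy.) Consequently $\mathbb P(P_{\mathrm H}^{PS}<P_{\mathrm C})\le\mathbb P(\mathcal K(\mathcal B(0,r))=0)$.

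It then remains to evaluate this void probability. The generating-functional (Laplace-transform) identity for $\alpha$-determinantal processes, see \cite{ShiraiTakahashi}, gives for any compact $D$
\begin{equation*}
\mathbb P\big(\mathcal K(D)=0\big)=\mathrm{Det}\big(I+\alpha\,K_D\big)^{-1/\alpha}=\prod_{n\ge 0}\big(1+\alpha\,\mu_n^{D}\big)^{-1/\alpha},
\end{equation*}
where $K_D$ is the restriction of the Ginibre kernel $K$ to $\mathrm L^2(D)$ and $(\mu_n^{D})_n$ are its eigenvalues. Applying this with $D=\mathcal B(0,r)$, $r\le R$, it remains to identify $\mu_n^{\mathcal B(0,r)}$. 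By the radial symmetry of $K$, the functions $z\mapsto z^n e^{-\frac{\pi\rho}{2}|z|^2}$ are mutually orthogonal on every disk $\mathcal B(0,r)$ and diagonalize $K$ restricted to it --- exactly as in the derivation of \eqref{eq:eigenvalues} in \cite{DecreusefondFlintVergne} --- and a one-line computation of $\int_0^r s^{2n+1}e^{-\pi\rho s^2}\,\mathrm{d}s$ (substitution $t=\pi\rho s^2$) gives $\mu_n^{\mathcal B(0,r)}=\Gamma(n+1,\pi\rho r^2)/n!$, i.e.\ \eqref{eq:eigenvalues} with $R$ replaced by $r$. Substituting into the product yields \eqref{power_outage_probability}.

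The algebraic reductions above are routine. The two facts to set down with some care are those invoked last: the $\alpha$-DPP void-probability identity (quoted from \cite{ShiraiTakahashi}; one should double-check the normalization $\mathrm{Det}(I+\alpha K_D)^{-1/\alpha}$ --- it specializes to $\det(I-K_D)$ at $\alpha=-1$ and, as $\alpha\to0$, to the Poisson void probability $e^{-\rho|D|}$, using $\sum_{n\ge0}\Gamma(n+1,\pi\rho r^2)/n!=\pi\rho r^2$), and the diagonalization of $K$ restricted to the sub-ball $\mathcal B(0,r)$. I expect this sub-ball eigenvalue computation, together with the bookkeeping around $\inf(R,\gamma^{PS})$ and the guard distance $\epsilon$, to be the only slightly delicate points; the remainder is straightforward.
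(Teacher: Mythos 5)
Your proof is correct and follows essentially the same route as the paper: the paper's own proof simply rewrites the outage event as in \eqref{eq:power_outage} and then cites Theorem~2 of \cite{UL-SA-TS}, whose content (reduction to a void event on a centered ball plus the $\alpha$-determinantal void-probability formula $\prod_{n\ge0}(1+\alpha\mu_n^D)^{-1/\alpha}$ with the sub-ball Ginibre eigenvalues) is exactly what you reconstruct in full. Your observation that the event only forces $\lVert\mathbf{x}_k\rVert>\gamma^{PS}-\epsilon$, so that the radius in the bound should strictly be $\inf(R,(\gamma^{PS}-\epsilon)_+)$ rather than $\inf(R,\gamma^{PS})$, is a legitimate (if practically negligible) discrepancy in the stated constant that the paper inherits from the cited result without comment.
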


\begin{proof}
Note that 
\begin{align} 
\label{eq:power_outage}
& \mathbb P\left(	 P_{\mathrm{H}}^{PS}	<	P_{\mathrm{C}}	\right) \nonumber \\
& =\mathbb P\left(	\sum_{ k \in {\mathcal{K}} }	\eta \beta P_{\mathrm{S}} \frac{G_{\mathrm{S}} G_{\mathrm{H}} \lambda^{2}}{(4\pi 	(\epsilon + \lVert {\mathbf{x}}_k \rVert)	)^{2}}<	P_{\mathrm{C}}- P^A_H	\right),
\end{align} 
whence it suffices to apply Theorem~2 of \cite{UL-SA-TS} to conclude.
\end{proof}




\subsection{Transmission Outage Probability}

In contrast with what was obtained in Theorem~\ref{thm:evharvestedenergyPS}, Theorem~\ref{thm:upperboundoutage} shows that the power outage probability depends on the repulsion parameter $\alpha$.

We begin by studying the in-band transmission scenario in the following theorem.
\begin{theorem}
\label{thm:boundpowersplittingPS}
Let us set
\begin{eqnarray} \label{eq:Tdef}
T =\frac{h_{\mathrm A}P_A}{2^{m/W}-1}-\sigma^2-\frac{\sigma_{SP}^2}{1-\eta }.
\end{eqnarray}
Assume that we are in the in-band scenario, i.e. $\xi=1$. The bound on the power outage probability is slightly different depending on the values of the parameters.
Specifically, if $P_{\mathrm C}-P^A_{\mathrm H}>0$,
then we obtain the bound in \eqref{eq:trans_outage}:
\begin{align} 
\label{eq:trans_outage}
& \mathbb P\left(	C^{PS}	<	m	\right) \nonumber  \\ 
& \le \left(\prod_{n\ge 0} \left(1+\alpha  \frac{\Gamma(n+1, \pi\rho\inf(R,\gamma^{PS})^2)}{n!}\right)\right)^{-1/\alpha} \nonumber  \\
&	+\frac{\rho P_{\mathrm{S}} G_{\mathrm{S}}G_{\mathrm{H}} \lambda^{2}\left(\frac{\epsilon}{R+\epsilon}+\ln(R+\epsilon)-1-\ln(\epsilon)\right)}{8\pi\max\left(T,\frac1{\eta \beta}\left( P_{\mathrm C}- P^A_H \right)\right)}.
\end{align} 

If  $P_{\mathrm C}- P^A_H\le0$, and $T\ge0$,
then
\begin{multline}
\label{eq:boundpsi2}
\mathbb P\left(	C^{PS}	<	m	\right)\\
\le\frac{\rho P_{\mathrm{S}} G_{\mathrm{S}}G_{\mathrm{H}} \lambda^{2}\left(\frac{\epsilon}{R+\epsilon}+\ln(R+\epsilon)-1-\ln(\epsilon)\right)}{8\pi T}.
\end{multline}

Lastly, if $\max\left(T, P_{\mathrm C}- P^A_H\right)\le0$, then we have
\begin{equation*}
	\mathbb P\left(	C^{PS}	<	m	\right)=1.
\end{equation*}
\end{theorem}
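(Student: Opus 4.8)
The plan is to reduce the statement to Theorem~\ref{thm:upperboundoutage} together with a Markov-type bound on the ambient interference sum, so that essentially no new probabilistic machinery is needed. Write
\[
J:=\sum_{k\in\mathcal K} P_{\mathrm S}\frac{G_{\mathrm S}G_{\mathrm H}\lambda^2}{(4\pi(\epsilon+\lVert\mathbf x_k\rVert))^2},
\]
so that the interference is $I^{PS}=(1-\eta)J$ and the harvested rate is $P_{\mathrm H}^{PS}=P^A_{\mathrm H}+\eta\beta J$. From the two branches of \eqref{eq:maxtransmissionPS}, the event $\{C^{PS}<m\}$ splits into two disjoint pieces: the power-outage branch $\{P_{\mathrm H}^{PS}<P_{\mathrm C}\}$, on which $C^{PS}=0<m$, and the branch $\{P_{\mathrm H}^{PS}\ge P_{\mathrm C}\}\cap\{W\log_2(1+\mathrm{SINR})<m\}$, where $\mathrm{SINR}$ is the argument of the logarithm in \eqref{eq:maxtransmissionPS} with $\xi=1$.

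First I would carry out two elementary rearrangements. Since $W\log_2(1+\mathrm{SINR})<m$ is equivalent to $\mathrm{SINR}<2^{m/W}-1$, clearing the (positive) denominator and dividing by $1-\eta$ turns it into exactly $J>T$, with $T$ as in \eqref{eq:Tdef}. Similarly $\{P_{\mathrm H}^{PS}<P_{\mathrm C}\}=\{J<(P_{\mathrm C}-P^A_{\mathrm H})/(\eta\beta)\}$ and $\{P_{\mathrm H}^{PS}\ge P_{\mathrm C}\}=\{J\ge(P_{\mathrm C}-P^A_{\mathrm H})/(\eta\beta)\}$. Next I would record $\mathbb E[J]$: by Campbell's formula and the polar change of variables, exactly as in the proof of Theorem~\ref{thm:evharvestedenergyPS}, one gets $\mathbb E[J]=\tfrac{\rho P_{\mathrm S}G_{\mathrm S}G_{\mathrm H}\lambda^2}{8\pi}\bigl(\tfrac{\epsilon}{R+\epsilon}+\ln(R+\epsilon)-1-\ln\epsilon\bigr)$, which is precisely the numerator occurring in \eqref{eq:trans_outage} and \eqref{eq:boundpsi2}; then Markov's inequality gives $\mathbb P(J\ge c)\le\mathbb E[J]/c$ for any $c>0$.

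The three cases then follow by bookkeeping. If $P_{\mathrm C}-P^A_{\mathrm H}>0$, then $P_{\mathrm C}\ge P^A_{\mathrm H}$, so Theorem~\ref{thm:upperboundoutage} bounds $\mathbb P(P_{\mathrm H}^{PS}<P_{\mathrm C})$ by the displayed product; moreover $\{P_{\mathrm H}^{PS}\ge P_{\mathrm C}\}\cap\{J>T\}\subseteq\{J\ge\max(T,(P_{\mathrm C}-P^A_{\mathrm H})/(\eta\beta))\}$, a positive threshold, so Markov produces the second summand of \eqref{eq:trans_outage}; adding the two bounds yields \eqref{eq:trans_outage}. If $P_{\mathrm C}-P^A_{\mathrm H}\le0$, then $P_{\mathrm H}^{PS}=P^A_{\mathrm H}+\eta\beta J\ge P^A_{\mathrm H}\ge P_{\mathrm C}$, so the power-outage branch is empty and $\{C^{PS}<m\}=\{J>T\}$; when $T\ge0$ Markov gives \eqref{eq:boundpsi2}, and when in addition $T\le0$ (i.e. $\max(T,P_{\mathrm C}-P^A_{\mathrm H})\le0$) the inequality $J\ge0\ge T$ makes $\{J>T\}$ the whole sample space, so the probability is $1$.

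I expect the only places needing attention to be the sign check on $2^{m/W}-1$ in the first rearrangement and the correct intersection of $\{P_{\mathrm H}^{PS}\ge P_{\mathrm C}\}$ with $\{J>T\}$, which is what makes the $\max$ appear in the denominator of \eqref{eq:trans_outage}; both are routine, and the DPP enters only through Theorem~\ref{thm:upperboundoutage} and through $\mathbb E[J]$, which are already available. There is a harmless imprecision at the boundary value $T=0$ in the last case, where $\{\mathcal K=\emptyset\}$ has positive probability and the conclusion should be read as ``essentially $1$'' (equivalently, the substantive regime is $T<0$); I would simply note this in passing.
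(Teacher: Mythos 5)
Your proposal is correct and follows essentially the same route as the paper's proof: the same decomposition of $\{C^{PS}<m\}$ into the power-outage branch (bounded via Theorem~\ref{thm:upperboundoutage}) and the event $\{J>\max(T,(P_{\mathrm C}-P^A_{\mathrm H})/(\eta\beta))\}$ (bounded via Markov's inequality with $\mathbb E[J]$ computed by Campbell's formula as in Theorem~\ref{thm:evharvestedenergyPS}). Your parenthetical remark about the boundary case $T=0$, where $\{\mathcal K=\emptyset\}$ has positive probability so the stated equality to $1$ is not exact, is a valid refinement that the paper itself glosses over.
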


\begin{proof}
Using the definition of $C^{PS}$ given in \eqref{eq:maxtransmissionPS}, we find \eqref{eq:trans_outage_der}.
\begin{figure*} 
\normalsize
\begin{align} 
\label{eq:trans_outage_der}
& \mathbb P\left(	C^{PS}	<	m	\right) =\mathbb P\left(P_{\mathrm H}^{PS} < P_{\mathrm C}	\right) +\mathbb P\left(	C^{PS}	<	m,P_{\mathrm H}^{PS} \ge P_{\mathrm C}	\right) \nonumber \\
&	=\mathbb P\left(P_{\mathrm H}^{PS} < P_{\mathrm C}	\right) +\mathbb P\left( h_{\mathrm A}(1-\eta)P_A<\left((1-\eta)\sigma^2+\sigma_{SP}^2+I^{PS}\right)\left(2^{m/W}-1\right),P_{\mathrm H}^{PS} \ge P_{\mathrm C} \right)\nonumber\\
&	=\mathbb P\left(P_{\mathrm H}^{PS} < P_{\mathrm C}	\right)  
	\qquad+\mathbb P\left(\max\left(T, \frac1{\eta \beta}\left( P_{\mathrm C}- P^A_{\mathrm H}\right)\right)<\sum_{k \in \mathcal K} P_{\mathrm{S}} \frac{G_{\mathrm{S}} G_{\mathrm{H}} \lambda^{2}}{(4\pi (\epsilon + \lVert {\mathbf{x}}_k \rVert)	)^{2}}\right).
\end{align} 
\vspace{1pt} \hrulefill
\end{figure*}
By proceeding along the same lines as in the proof of Theorem~2 of \cite{UL-SA-TS}, we set
\begin{equation*}
f(\mathbf{x}_k):= P_{\mathrm{S}} \frac{G_{\mathrm{S}} G_{\mathrm{H}} \lambda^{2}}{(4\pi (\epsilon+\|\mathbf{x}_k\|))^{2}},
\end{equation*}
for $k\in\mathcal{K}$. Then,
\begin{align} 
\label{eq:transoutage_proof}
& \mathbb P\left(	C^{PS}	<	m	\right) 
	=\mathbb P\left(P_{\mathrm H}^{PS} < P_{\mathrm C}	\right)  \nonumber \\
&	+\mathbb P\left(\sum_{k\in\mathcal{K}} f(\mathbf{x}_k)>\max\left(T, \frac1{\eta \beta}\left( P_{\mathrm C}- P^A_{\mathrm H}\right)\right)\right)
\end{align} 
where $P^A_{\mathrm H}$ and $T$ are defined in \eqref{eq:harvestedRFpower} and \eqref{eq:Tdef}, respectively.
From this, we conclude that if  $\max\left( T,\frac1{\eta \beta}\left( P_{\mathrm C}-P^A_{\mathrm H}\right)\right)\le0$,
then
\begin{equation*}
	\mathbb P\left(	C^{PS}	<	m	\right)=1.
\end{equation*}
Otherwise, if $\max\left(T,\frac1{\eta \beta}\left( P_{\mathrm C}-P^A_{\mathrm H}\right)\right)>0$, then by Markov's inequality, we have
\begin{align}
&	\mathbb P\left(	C^{PS}	<	m	\right) \le \mathbb P\left(P_{\mathrm H}^{PS} < P_{\mathrm C}	\right)  \nonumber \\
&	+\frac{1}{\max\left( T,\frac1{\eta \beta}\left( P_{\mathrm C}- P^A_{\mathrm H} \right)\right)}\mathbb E\left[\sum_{k\in\mathcal{K}} f(\mathbf{x}_k)\right],
\end{align}
which can be computed as
\begin{align}
\label{eq:intermequationofthm3}
	& \mathbb P\left(	C^{PS}	<	m	\right)  \le\mathbb P\left(P_{\mathrm H}^{PS} < P_{\mathrm C}	\right) \nonumber \\
	& +\frac{\rho P_{\mathrm{S}} G_{\mathrm{S}}G_{\mathrm{H}} \lambda^{2}\left(\frac{\epsilon}{R+\epsilon}+\ln(R+\epsilon)-1-\ln(\epsilon)\right)}{8\pi\max\left( T,\frac1{\eta \beta}\left( P_{\mathrm C}-P^A_{\mathrm H} \right)\right)},
\end{align}
by Theorem~1 of \cite{UL-SA-TS}. As for the first term of \eqref{eq:intermequationofthm3}, it is upper-bounded by a straightforward application of Theorem~\ref{thm:upperboundoutage}, and we find \eqref{eq:intermequation2ofthm3},
\begin{figure*} 
\normalsize
\begin{align} 
\label{eq:intermequation2ofthm3}
 \mathbb P\left(	C^{PS}	<	m	\right)  
 \le\left(\prod_{n\ge 0} \left(1+\alpha  \frac{\Gamma(n+1, \pi\rho\inf(R,\gamma^{PS})^2)}{n!}\right)\right)^{-1/\alpha}1_{\left\{P_{\mathrm C}-P^A_{\mathrm H} \ge0\right\}}   
 +\frac{\rho P_{\mathrm{S}} G_{\mathrm{S}}G_{\mathrm{H}} \lambda^{2}\left(\frac{\epsilon}{R+\epsilon}+\ln(R+\epsilon)-1-\ln(\epsilon)\right)}{8\pi\max\left( T,\frac1{\eta \beta}\left( P_{\mathrm C}- P^A_{\mathrm H}\right)\right)},
\end{align} 
\vspace{1pt} \hrulefill
\end{figure*}
where $1_A$ is the indicator function of a set $A$, i.e. the functional equal to $1$ on $A$ and equal to $0$ elsewhere.

Note that the right hand side of \eqref{eq:intermequation2ofthm3} might be larger than $1$, and in that case we do better than the trivial inequality $\mathbb P\left(	C^{PS}	<	m	\right)\le 1$. Note also that \eqref{eq:intermequation2ofthm3} is a compact notation of the different cases discussed in Theorem~\ref{thm:boundpowersplittingPS}.
\end{proof}

For the out-of-band transmission scenario, an upper bound of the transmission outage probability can also be derived based on Markov inequality, following similar steps similar to those in Theorem~\ref{thm:boundpowersplittingPS}. Due to the space limit, we omit it in this paper.


\section{Numerical Results}


We assume that all the ambient RF transmitters are LTE-enabled mobiles operating on the typical $1800MHz$ frequency. The corresponding wavelength $\lambda$ is $0.167m$.  The circuit power consumption $P_{\mathrm C}$ is fixed to be $-18dBm$ (i.e., $15.8\mu W$) as in~\cite{UL-SA-TS}. The other parameters adopted in the simulations are shown in Table \ref{parameter_setting} unless specified otherwise.
Note that the results for the PPP case are identical to that of the $\alpha$-DPP, when $\alpha=0$. The numerical results presented in this section are averaged over $10^5$ simulation runs.

We interpret the upper-bounds derived in the previous section as worst-case scenarios.
This leads us to perform the simulations of this section in a different regime, in an attempt to approach the upper bounds. 
The simulation under this regime is known to perfectly approach the upper-bound of Theorem~\ref{thm:upperboundoutage}, whereas there is still a gap compared to the upper-bound of Theorem~\ref{thm:boundpowersplittingPS}.

\begin{table}
\centering
\caption{\footnotesize Parameter Setting.} \label{parameter_setting}
\begin{tabular}{|l|l|l|l|l|l|l|l|} 
\hline
Symbol & $G_{S}$,$G_{A}$,$G_{H}$ & $\beta$  & $P_{S}$,$P_{A}$ & $W$ &    $\sigma^2$,$\sigma^2_{SP}$ \\ 
\hline
Value  & 1.5 &  0.3 & 1W  & 10KHz  &  -90dBm \\
\hline              
\end{tabular}
\end{table}


Figure \ref{fig:amount_of_energy} shows the expected RF energy harvesting rate versus density of ambient RF transmitters. We can see that the simulation results, which were done in the general scenario described in this paper, match the analytical expression (\ref{eq:average_energy}) accurately over a wide range of transmitter densities $\rho$, i.e., from $0.01$ to $1$. 
We observe that when $\epsilon=0.001$, the sensor achieves larger RF energy harvesting rate than that in the case when $\epsilon=0.1$. 
This result is expected since, from \eqref{eq:PHpowersplitting}, the smaller the distance $\epsilon$ (i.e, the RF transmitters can be located near the sensor) the more aggregated RF energy harvesting rate is available. 
We also find that the difference between the exact analytical results obtained from (\ref{eq:average_energy}) and approximate results obtained from \eqref{eq:approximation_energy} is also dependent on $\epsilon$. Specifically, a smaller $\epsilon$ results in a more accurate approximation. As shown in Fig.~\ref{fig:amount_of_energy}, compared to when $\epsilon=0.1$, the approximate results more closely approach  the analytical results when $\epsilon=0.001$.  

\begin{figure}
\centering
\includegraphics[width=0.4\textwidth]{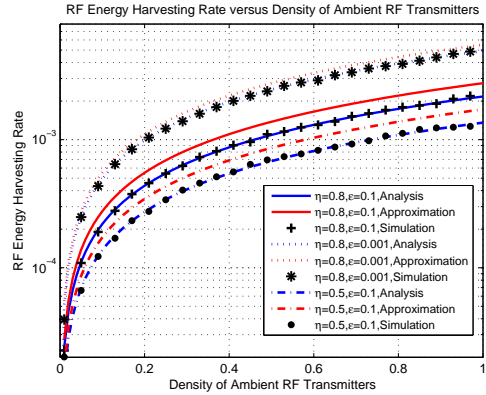}
\caption{RF energy harvesting rate versus density of ambient RF transmitters. } \label{fig:amount_of_energy}
\vspace{-5mm}
\end{figure}

Figure~\ref{fig:poweroutage_rho} shows the upper bound of the power outage probability versus the density of ambient RF transmitters, when $\eta=0.5$ and $\eta=1$. We observe that $P_{po}$ is a decreasing function of the transmitter density $\rho$. The numerical results, which were done in a worst-case scenario, are shown to approach the analytical expression in (\ref{power_outage_probability}) accurately for different settings of $\alpha$. Moreover, a larger repulsion among the location of the RF transmitters (i.e., smaller $\alpha$) results in a lower power outage probability. 

\begin{figure}
\centering
\includegraphics[width=0.4\textwidth]{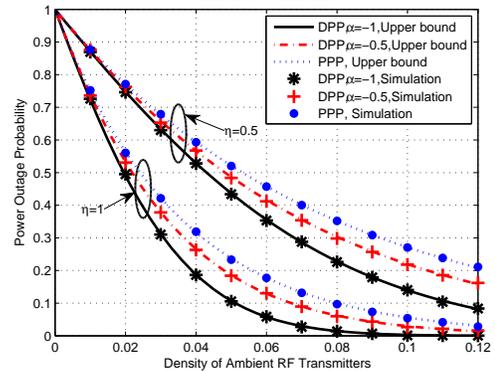}
\caption{Upper bound of power outage probability versus density of ambient RF transmitters.} \label{fig:poweroutage_rho}
\end{figure}

Figures \ref{fig:transoutage_rho} and \ref{fig:transoutage_d} present the upper bound of the transmission outage probability versus the density of RF transmitters and versus the distance between the access point and the sensor, respectively, in an in-band transmission scenario. We set the transmission rate requirement to be 0.02kbps and $\eta$ to be 0.5. 
In Fig. \ref{fig:transoutage_rho}, 
we see that, with the increase of the transmitter density, the upper bound of the transmission outage probability first decreases quickly, then begins to rebound slowly after a certain point. The rebound effect is caused by the increased interference due to the growth in transmitter density, which lowers the achievable transmission rate.       
Similar to Fig. \ref{fig:transoutage_rho}, we observe in Fig. \ref{fig:transoutage_d} that our analytical bound is tight when $d_A$ is small, and becomes more relaxed with the increase of $d_A$. It can be seen that when $d_A$ is small, there is no transmission outage. This is because when the access point locates within a certain close range near the sensor, the sensor can receive not only enough information rate but also sufficient power from the SWIPT transmission alone, regardless of the ambient transmitter density. Thus, the transmission outage probability equals zero in this case.

\begin{figure}
\centering
\includegraphics[width=0.4\textwidth]{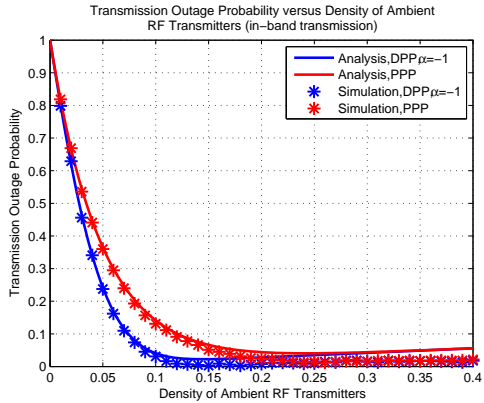}
\caption{Upper bound of transmission outage probability versus density of ambient RF transmitters.} \label{fig:transoutage_rho}
\vspace{-5mm}
\end{figure}

\begin{figure}
\centering
\includegraphics[width=0.4\textwidth]{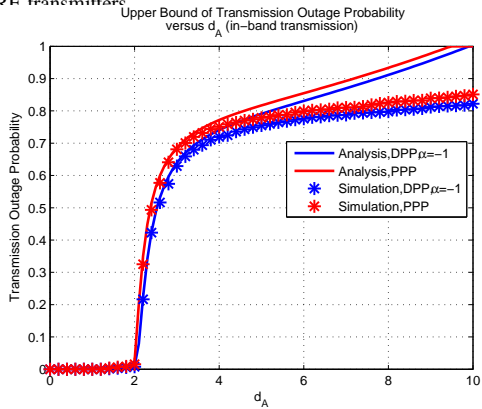}
\caption{Upper bound of transmission outage probability versus the distance $d_A$ ($\rho=0.02$).} \label{fig:transoutage_d}
\vspace{-5mm}
\end{figure}

\section{Conclusion}

We have analyzed the performance of an RF-powered sensor network in a downlink SWIPT system with ambient RF transmitters. We have adopted a repulsive point process, called a Ginibre $\alpha$-determintal point process, which allows to model a network where the locations of the RF transmitters demonstrate repulsion. We have derived the expression of the expected RF energy harvesting rate of the RF-power sensor. We have also characterized the worst-case performance of the sensor node in terms of the upper bounds of power and transmission outage probabilities. The performance evaluation shows that the exact analytical results and simulation results are well matched with the simulation results. Therefore, the proposed analysis will be useful in practice. Our future work will extend the performance analysis from a single-antenna network to a multi-antenna network. Another direction is to explore the network performance in other metrics, such as downlink coverage probability of the access point.


\section*{Acknowledgements}
This work was supported in part by Singapore MOE Tier 1 (RG18/13 and  RG33/12), MOE Tier 2 (MOE2013-T2-2-067) and MOE Tier 2 (ARC3/13).

\end{document}